\newtheorem{theorem}{Theorem}
\newtheorem{definition}{Definition}
\newtheorem{assumption}{Assumption}
\newtheorem{proposition}{Proposition}
\newtheorem{lemma}{Lemma}
\newtheorem{remark}{Remark}
\newcommand {\be}{\begin{equation}}
\newcommand {\ee}{\end{equation}}
\newcommand {\bes}{\begin{equation*}}
\newcommand {\ees}{\end{equation*}}
\newcommand{\N}{\mathbb{N}}
\newcommand{\R}{\mathbb{R}}
\newcommand{\mc}[1]{\mathcal{#1}}
\newcommand{\q}{z}
\newenvironment{ps}
  {\left[\begin{smallmatrix}}
  {\end{smallmatrix}\right]}
\title{\LARGE \bf
Sensitivity analysis for network aggregative games}
\author{Francesca Parise, Asuman Ozdaglar
\thanks{F. Parise and A. Ozdaglar are with the Laboratory for Information and Decision Systems, Massachusetts Institute of Technology, Cambridge, MA:  {\tt\small \{parisef,asuman\}@mit.edu}. Research supported by the SNSF grant number P2EZP2 168812 and partially supported by ARO MURI W911NF-12-1-0509.}
         }
\begin{document}

\maketitle
\begin{abstract}
 We investigate the sensitivity of the Nash equilibrium of constrained network aggregative  games to changes in exogenous parameters affecting the cost function of the players. 
This setting is motivated by two applications. The first is the analysis of interventions by a social planner with a networked objective function while the second is network routing games with atomic players and information constraints. By exploiting a primal reformulation of a sensitivity analysis result for variational inequalities, we provide a characterization of the sensitivity of the Nash equilibrium that depends on  primal variables only.  To derive this  result we assume strong monotonicity of the mapping associated with the game. As the second main result, we derive sufficient conditions that guarantee this strong monotonicity property in network aggregative games. These two characterizations allows us to systematically study changes in the Nash equilibrium due to perturbations or parameter variations in the two applications mentioned above.
\end{abstract}

\section{Introduction}

Network aggregative games (NAGs) are games in which the cost function of each agent depends on its own strategy and on an aggregate of the strategies of its neighbours, as defined by an underlying interaction network. NAGs are used  to model a vast range of applications  spanning from  sociology \cite{ghaderi2014opinion}  and economics \cite{acemoglu2015networks} to traffic \cite{roughgarden2007routing} and energy markets \cite{ma:callaway:hiskens:13,chen2014autonomous}. 
In many cases, the resulting models depend on a vector of parameters that might represent either some exogenous factor in the case of social applications (e.g.  rumors in opinion dynamics or  shocks in financial networks) 
or a tuning variable in the case of technological applications (e.g. road improvements in traffic networks  or the price elasticity in energy markets). In both cases, to fully understand (and possibly control) the behaviour of the system it is important to study the effect that variations of such parameters  have on the final outcome of the game. 
As a first step in this direction,  we here focus on  the sensitivity of the Nash equilibrium to small changes in the cost functions.  

 Our first main contribution is the derivation of a general \textit{sensitivity result for the Nash equilibrium of games with constraints}. To this end, we use an implicit function theorem type of argument to get a sensitivity result for the KKT system of the variational inequality associated with the game, as done in \cite{friesz2016foundations}. Our key idea is then to do a primal reformulation to get a sensitivity result that involves only the decision variables (i.e., the players strategies) and not the dual variables. This reformulation in terms of primal variables only  is new to the best of our knowledge and allows us to derive useful insights for two important classes of NAGs discussed next.
    The only other formulation in terms of primal variables  that we are aware of is derived in \cite{dafermos1988sensitivity}. Therein, however, a  more general setup is assumed and, as a consequence, the obtained formula  conceals the effect of the constraints on the Nash equilibrium. This is, on the other hand, immediate from  our reformulation.
 
 To derive our first result we need to assume strong monotonicity of the operator of the variational inequality associated with the game. 
  As the second main contribution, we derive   \textit{two sufficient conditions to guarantee strong monotonicity} of this operator in the case of  NAGs. The formulation of NAGs that we use here is the one  presented  in \cite{parise2017network}. We note that sufficient conditions for the case of average aggregative games, that is, games where the cost function of each agent depends on the average of the whole population (instead of the average of the neighbours) have been derived  in \cite{paccagnan2016aggregative,gentile2017nash} in terms of the cost function. The  conditions for NAGs that we derive here instead depend both on the cost function and on the network.

 The first specific class of NAGs  that we consider is that of \textit{quadratic network games} \cite{jackson2014games}. Here we assume that for each agent $i$ there is a  parameter $y^i$ that represents a small  stochastic perturbation or shock  to the cost function of agent~$i$.  Without interventions, these shocks would drive  the system from the initial Nash equilibrium $x^\star(0)$ (in the absence of shocks)  to the new Nash equilibrium configuration  $x^\star(y)$. We first show that the  sensitivity results in \cite{acemoglu2015networks}  can be found as a corollary of our main result. We then consider a controlled setting in which a central authority  can intervene on the system and guarantee that the strategy of one agent $k$ in the new equilibrium  is the same as in the case without shocks. In other words, we assume that the central authority can impose the constraint  $x^\star_k(y)=x^\star_k(0)$. We use our main result to characterize which agent should be selected by the central authority to minimize the effect of the shocks on the whole system. As a corollary of this analysis, if the  adjacency matrix of the network is non-negative, one can recover the  key-player result derived in \cite{ballester2006s} for linear-quadratic network games. We demonstrate the effectiveness of our approach on a model of rumor spreading in opinion dynamics. Other applications of  quadratic network games where our results could apply are  production networks and systemic risk in financial networks as described  in \cite{acemoglu2015networks} or  crime, education and urban dynamics as described in \cite{jackson2014games,ballester2006s}.

  The second  specific class of NAGs that we consider  is that of \textit{atomic routing games} \cite{roughgarden2007routing}.  Specifically, we consider games where each player is a user that needs to allocate a fixed amount of splittable flow from a given origin to a given destination on a traffic network. The cost function of each user is its travel time, which depends on the strategy of the other users because of congestion effects. Here we assume that there is one parameter $y^e$ for each road $e$ of the traffic network. Variations of the parameter  $y^e$ are determined by a central authority and model the effort made by the authority to reduce the travel time on road $e$. Our objective is to characterise which road should the central authority  improve  to optimize the  system performance (i.e. to reduce the total travel time for all the users).   
 Besides the natural network flow constraints, we additionally use   constraints  to model the fact that the  users might have different sets of  information on the network, as motivated in \cite{acemoglu2016informational}. For example different agents may know different subsets of roads. Finally, we note that  the standard sensitivity results derived in \cite{qiu1989sensitivity, josefsson2007sensitivity, lu2008sensitivity, do2014sensitivity} do not apply in our setting because, on the one hand, we consider agents with information constraints and, on the other hand, we consider atomic agents (i.e., we focus on the Nash equilibrium instead of the Wardrop equilibrium).

Besides the already mentioned works, our results have strong connections with sensitivity results of  Nash equilibria  that have been derived in the economics literature (referred to as comparative statics).  Most of the economics literature has  focused on games with specific ordering properties (e.g., games with strategic complements/substitutes  as discussed in \cite{dubey2006strategic,milgrom1994monotone}) or games where the aggregate of the strategies is a scalar quantity (e.g.,   scalar aggregative games have been treated in \cite{jensen2010aggregative,acemoglu2013aggregate,bramoulle2014strategic}).
The main difference of our work from  this line of  results is that we consider explicitly the effect of the constraints.

The rest of the paper is organized as follows. In Section~\ref{NAG} we introduce network aggregative games and our main result on sensitivity of the Nash equilibrium in constrained games. In Section \ref{sec:SMON} we derive two sufficient conditions for strong monotonicity in NAGs. In Section \ref{systemic_risk} and \ref{traffic_sec} we show how these results can be applied to quadratic network games and atomic routing games, respectively. 

\subsubsection*{Notation} We denote the Jacobian of  a function $f(x):\R^n\rightarrow \R^d$  by $\nabla_x f(x) \in \R^{d \times n}.$ Given $a,b\in\N$, $\N[a,b]$ denotes the set of integer numbers in the interval $[a,b]$.
 $I_n$ denotes the $n$-dimensional identity matrix, $\mathbbm{1}_n$ the vector of unit entries and $e_i$ the $i$th canonical vector.
Given $A\in\mathbb{R}^{n\times n}$ not necessarily symmetric, $A\succ0$ ($\succeq0$) $\Leftrightarrow$ $x^\top A x>0~(\ge0),$ $\forall x\neq 0$,    $A_{(k,:)}$ denotes the $k$th row of $A$,  $\rho(A)$ denotes the spectral radius of $A$ and $\Lambda(A)$ its spectrum. $A\otimes B$ denotes the Kronecker product and $[A;B]:=[A^\top, B^\top ]^\top$. Given $N$ matrices $\{A^i\}_{i=1}^N$, $\mbox{blkdiag}(A^1,\ldots,A^N)$ is the block diagonal matrix whose $i$th block is  $A^i$. Given $N$ vectors $x^i\in\mathbb{R}^{n}$, $[x^1;\ldots;x^N]:=[x^i]_{i=1}^N:=[{x^1}^\top,\ldots ,{x^N}^\top]^\top\in\R^{Nn}$. 
 $|\mathcal{E}|$ denotes the cardinality of the set $\mathcal{E}$. 

\section{ Network aggregative games}
\label{NAG}

\subsection{ The model}

We consider a game with $N$ players that are interacting over a network with non-negative  adjacency matrix $P$, \cite{parise2017network}. We assume $P_{ii}=0$ for all $i\in\N[1,N]$. We say that $j$ is a neighboor of $i$ if $P_{ij}>0$. The set of neighbours of player $i$ is denoted by $\mathcal{N}^i$.
Each player $i\in\N[1,N]$ aims at selecting a vector strategy $x^i\in\R^n$ among its feasible set $\mathcal{X}^i\subseteq \R^n$  to \textit{minimize} a  cost function
\begin{equation}\label{gamey}
J^i(x^i,\q^i(x),y^i),
\end{equation}
which depends on its own strategy $x^i$, on the linear combination of the neighbours strategies  according to the coefficients of the network  $$\textstyle \q^i(x):=\sum_{j=1}^n P_{ij}x^j=\sum_{j\in\mathcal{N}^i} P_{ij}x^j$$ and on a parameter $y^i\in\mathcal{Y}^i\subseteq \R^{d_i}$. Let $y:=[y^i]_{i=1}^N$ and $\mathcal{Y}=\mathcal{Y}^1\times \ldots \times \mathcal{Y}^N$.
 For each admissible parameter vector $y\in\mathcal{Y}$ we refer to the game with $N$ players and cost functions as in~\eqref{gamey} as the network aggregative game (NAG) with parameter $y$ and we denote it  by $\mathcal{G}(y)$. We assume that for each $y\in\mathcal{Y}$,  the game $\mathcal{G}(y)$ satisfies the following regularity conditions. 
\begin{assumption}[Constraints]\label{constraints}
The  constraint sets can be expressed as $\mathcal{X}^i:=\{x^i\in\R^n\mid B^i x^i\le b^i, H^i x^i=h^i\}$, for some $B^i\in\R^{m_i \times n},  b^i\in\R^{m_i}, H^i\in\R^{p_i \times n},  h^i\in\R^{p_i}$. 
There exists $x^i\in\mathcal{X}^i$ such that $B^ix^i<b^i$. 
\end{assumption}

In the following, we define  $x:=[x^i]_{i=1}^N\in\R^{Nn}$ the vector whose $i$-th block component is the  strategy of agent~$i$. We also define $m:=\sum_{i=1}^N m_i, p:=\sum_{i=1}^N p_i$, $B:=\mbox{blkdiag}(B^1,\ldots, B^N) \in\R^{m\times Nn}, b=[b^i]_{i=1}^N \in\R^{m}, H:=\mbox{blkdiag}(H^1,\ldots, H^N) \in\R^{p \times Nn}, h=[h^i]_{i=1}^N \in\R^{p}$, so that $\mathcal{X}:=\mathcal{X}^1\times\ldots\times \mathcal{X}^N=\{x\in\R^{Nn}\mid Bx\le b, Hx=h\}$.
\begin{assumption}[Cost function]\label{cost}
For all $y^i\in\mathcal{Y}^i$,  the function $x^i\mapsto J^i(x^i,\q^i(x),y^i)$ is twice continuously differentiable and  strongly convex in $x^i$  for all $i\in\N[1,N]$  and for all $x^{-i}\in\mathcal{X}^{-i}$. Moreover, $[x^i;z^i]\mapsto J^i(x^i,z^i,y^i)$ is twice continuously differentiable in $[x^i;z^i]$.
\end{assumption}

In the rest of the paper we are going to study properties of the Nash equilibrium of the game $\mathcal{G}(y)$, as defined next. 
\begin{definition}[Nash equilibrium]\label{nash}
A set of strategies $\{ x^{\star i}(y) \in \mathcal{X}^i\}_{i=1}^N$ is a Nash equilibrium of the game $\mathcal{G}(y)$ if for all players $i\in\N[1,N]$ we have that for all $ x^i\in\mathcal{X}^i$
\begin{align*}
\textstyle J^i(x^{\star i}(y),\sum_{j\in\mathcal{N}_i} P_{ij} &x^{\star j}(y) ,y^i)\\ &\textstyle \le  J^i(x^i,\sum_{j\in\mathcal{N}_i} P_{ij}x^{\star j}(y) ,y^i).
\end{align*}
\end{definition}

\subsection{ Connection between game theory and  variational inequalities}

We start by recalling an equivalent characterization of the Nash equilibrium of a generic game (i.e. not necessarily a NAG) in terms of variational inequalities. 
\begin{definition}[Variational inequality]\label{vi}
A vector $\bar x\in\R^n$ solves the variational inequality VI$(\mathcal{X},f)$ with set $\mathcal{X}\subseteq\R^n$ and operator 
$f(x):\mathcal{X} \rightarrow \R^n$ if and only if
$$f(\bar x)^\top (x-\bar x) \ge 0, \quad \forall x\in\mathcal{X}.$$
\end{definition}
To this end, we define the operator $F(x, y):\mathcal{X}\times\mathcal{Y}\rightarrow \R^{Nn}$ whose $i$-th block component is the gradient of the cost function of agent $i$ with respect to its own strategy for the game $\mathcal{G}(y)$. Mathematically,
\begin{align}\label{eq:F}
F(x, y)&:=[\nabla_{x^i} J^i(x^i,\q^i(x),y^i)^\top]_{i=1}^N.
\end{align}
This operator   is  of fundamental importance in the subsequent analysis because of the following well known relation.
\begin{proposition}[VI and KKT system]\label{prop:kkt}
Under Assumptions \ref{constraints} and \ref{cost}  the following statements are equivalent
\begin{enumerate}
\item a vector of strategies $x^\star(y)$ is a Nash equilibrium for the game  $\mathcal{G}(y)$; 
\item $x^\star(y)$ solves the VI$(\mathcal{X},F(\cdot,y))$;
\item  there exists $\lambda(y)\in \R^{m} $ and $\mu(y)\in \R^{p} $
such that
\begin{subequations}\label{KKT}
\begin{align}
&F(x^\star(y),y) + B^\top\lambda(y) + H^\top\mu(y)=0 \label{KKT1} \\
& H x^\star(y)=h, \quad \lambda(y)^\top (B x^\star(y)-b)=0, \label{KKT2}\\
& B x^\star(y)\le b,  \quad \lambda(y) \ge 0.
\end{align}\end{subequations}
\end{enumerate}
\end{proposition}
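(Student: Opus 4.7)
The plan is to establish the chain $(1)\Leftrightarrow(2)\Leftrightarrow(3)$, both of which are standard equivalences in the theory of variational inequalities, and to simply verify that Assumptions~\ref{constraints} and \ref{cost} provide exactly what each direction requires.

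First I would address $(1)\Leftrightarrow(2)$. Under Assumption~\ref{cost}, for each fixed profile $x^{\star-i}$ of the other players, the map $x^i\mapsto J^i(x^i,\q^i(x),y^i)$ is continuously differentiable and convex on the convex set $\mathcal{X}^i$. Consequently, $x^{\star i}$ is a best response to $x^{\star-i}$ if and only if the first-order optimality condition
\begin{equation*}
\nabla_{x^i}J^i(x^{\star i},\q^i(x^\star),y^i)^\top(x^i-x^{\star i})\ge 0, \qquad \forall x^i\in\mathcal{X}^i,
\end{equation*}
holds. Stacking these $N$ inequalities and recognizing the definition of $F$ in~\eqref{eq:F} yields the VI condition on the product set $\mathcal{X}$. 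For the converse, given a solution of the VI, I would plug in test vectors of the form $[x^{\star 1};\ldots;x^i;\ldots;x^{\star N}]$ with $x^i\in\mathcal{X}^i$ arbitrary, which isolates the $i$-th block and recovers the individual best-response condition, hence the Nash property.

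Next I would tackle $(2)\Leftrightarrow(3)$ by invoking the KKT characterization of a VI over a polyhedral feasible set. Assumption~\ref{constraints} guarantees that $\mathcal{X}$ is polyhedral, that the equality constraints are affine, and that a Slater-type point exists in each $\mathcal{X}^i$; together these provide a constraint qualification (for example, Mangasarian--Fromovitz at every feasible point). The standard KKT theorem for VIs over polyhedra then yields the existence of multipliers $\lambda(y)\ge 0$ and $\mu(y)$ satisfying Lagrangian stationarity~\eqref{KKT1} together with primal feasibility and complementary slackness~\eqref{KKT2}, and vice versa.

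I do not expect a genuine obstacle here; the only care required is bookkeeping. The mildly subtle point is that the VI on the product set $\mathcal{X}$ decomposes blockwise into $N$ independent best-response inequalities, and this decomposition relies on both the product structure of $\mathcal{X}$ from Assumption~\ref{constraints} and the fact that the $i$-th block of $F$ is precisely the gradient of the $i$-th player's cost with respect to its own strategy, as built into~\eqref{eq:F}. Once this is noted, the rest of the argument is a direct application of convex first-order conditions and the KKT theorem.
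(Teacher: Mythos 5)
Your proposal is correct and follows exactly the standard route that the paper itself delegates to the cited references (\cite{scutari2010convex}, \cite{facchinei2007finite}): the blockwise first-order optimality conditions over the product set for $(1)\Leftrightarrow(2)$, and the KKT characterization of a VI over a polyhedron for $(2)\Leftrightarrow(3)$. The only minor remark is that the Slater point from Assumption~\ref{constraints} is not actually needed for the KKT equivalence, since the constraints are affine and polyhedrality alone supplies the required constraint qualification; invoking it is harmless but redundant.
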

\begin{proof} This is a classic result, see e.g. \cite[Eq (18)]{scutari2010convex}, \cite[Proposition 1.3.4 and 3.2.1]{facchinei2007finite}. 
\end{proof}
In the following, we are going to focus on games for which the operator $F(x, y)$  is strongly monotone according to the following definition. Sufficient conditions for strong monotonicity to hold  in the case of NAGs are derived in Section \ref{sec:SMON}.

\begin{definition}\label{def:mon}
An operator $f(x):\mathcal{X}\subseteq \R^n\rightarrow \R^n$ is strongly monotone in $\mathcal{X}$ if there exists $\alpha>0$ such that $(f(x_1)-f(x_2))^\top(x_1-x_2)\ge \alpha \|x_1-x_2\|^2$ for all $x_1,x_2\in\mathcal{X}$. Equivalently, for all $x\in\mathcal{X}$, $\frac{\nabla_x f(x)+\nabla_x f(x)^\top}{2}\succeq \alpha I_n$, \cite[Proposition 2.3.2]{facchinei2007finite}. 
\end{definition}

\begin{assumption}[Strong monotonicity]\label{ass:SMON}
The operator $[x,y]\mapsto F(x, y)$ as defined in  \eqref{eq:F} is continuously differentiable in $[x,y]\in\mathcal{X}\times\mathcal{Y}$ and it is strongly monotone in $x\in \mathcal{X}$ for any fixed $y\in\mathcal{Y}$.
\end{assumption}

\subsection{ Main sensitivity analysis result for constrained games}
\label{theory} 

We here present our main sensitivity analysis result which holds for any game (i.e. not necessarily a NAG) with constraints satisfying the following assumption.

\begin{assumption}[Constraint qualification]\label{cq} Given a parameter $\bar y\in\mathcal{Y}$ and the corresponding Nash equilibrium $x^\star(\bar y)$ let $B^0, b^0$ be the matrices obtained by deleting from $B, b$ all the rows corresponding to constraints that are not active at $x^\star(\bar y)$ (i.e. the rows $k\in\N[1,m]$ such that $B_{(k,:)}x^\star(\bar y)< b_k$). Let $A:=[B^0 ; H]$ and $a:=[b^0,h]$. We assume that:
1) $A$ has full row rank;
2) the strict complementarity slackness condition
$\lambda_k(\bar y)>0 \mbox{ when } B_{(k,:)}x^\star(\bar y)= b_k $ 
is satisfied.\footnote{Under Assumptions \ref{constraints}, \ref{cost}  and  \ref{cq}.1, the dual variable $\lambda(\bar y)$ is unique, as shown in the proof of Theorem \ref{thm:sens}. }
\end{assumption}

We note that Assumption \ref{cq} is automatically satisfied if only equality constraints are present.

\begin{theorem}\label{thm:sens}
Suppose that Assumptions \ref{constraints},  \ref{cost} and \ref{ass:SMON}  are satisfied. Then for all $y\in\mathcal{Y}$ the game $\mathcal{G}(y)$ admits a unique Nash equilibrium $x^\star(y)$.  If additionally  Assumption \ref{cq} holds for $\bar y\in\mathcal{Y}$ then $x^\star(y)$ is locally differentiable at $\bar y$ and 
\begin{equation}\label{sensitivity}
\begin{aligned}
\nabla_y x^\star(\bar y) &= - M [\nabla_y F(x,y) ]_{\{x=x^\star(\bar y),y=\bar y\}}\\
\end{aligned}
\end{equation}
where 
\begin{equation}\label{L}
\begin{aligned}
M&:=[L-LA^\top [ALA^\top]^{-1}AL],\\ L&:=[\nabla_x F(x,y) ]_{\{x=x^\star(\bar y),y=\bar y\}}^{-1}
\end{aligned}
\end{equation}
and $A$ is as defined in Assumption \ref{cq}. Moreover, $M\succeq 0$.
\end{theorem}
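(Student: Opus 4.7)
The first claim---existence and uniqueness of $x^\star(y)$ for every $y\in\mathcal Y$---follows from Proposition~\ref{prop:kkt} combined with the classical fact that a strongly monotone VI over a nonempty closed convex set admits a unique solution. To prepare for differentiability at $\bar y$, the plan is to use the full-row-rank assumption (Assumption~\ref{cq}.1) to deduce uniqueness of the dual variables $\lambda(\bar y),\mu(\bar y)$ from \eqref{KKT1} (since $[B^\top\ H^\top]$ restricted to active rows has full column rank), and the strict-complementarity assumption (Assumption~\ref{cq}.2) together with continuity of $F$ to guarantee that the partition of inequality constraints into active/inactive is locally constant in a neighbourhood of $\bar y$.

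Once this active set is frozen, the KKT conditions~\eqref{KKT} reduce locally to the square nonlinear system
\begin{equation*}
F(x,y)+A^\top \nu=0,\qquad Ax=a,
\end{equation*}
where $\nu$ collects the multipliers associated with the rows of $A$. The plan is then to apply the implicit function theorem at $(x^\star(\bar y),\nu(\bar y),\bar y)$. The relevant Jacobian is
\begin{equation*}
\mathcal J:=\begin{pmatrix}\nabla_x F & A^\top \\ A & 0\end{pmatrix},
\end{equation*}
which is nonsingular because strong monotonicity makes $\nabla_x F$ (and hence $L=(\nabla_x F)^{-1}$) invertible in the quadratic-form sense, while the full row rank of $A$ makes the Schur complement $ALA^\top$ invertible. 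A block-wise inversion of $\mathcal J$ yields the top-left block $M = L - LA^\top(ALA^\top)^{-1}AL$, and implicit differentiation then gives $[\nabla_y x^\star;\nabla_y \nu]=-\mathcal J^{-1}[\nabla_y F;0]$, whose top block is exactly~\eqref{sensitivity}.

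To show $M\succeq 0$, the plan is to reformulate $M$ via a basis for the tangent space of the active constraints. Let $N$ be any matrix whose columns span $\ker A$, so that $AN=0$. By construction, $x:=Mb$ solves $L^{-1}x+A^\top \nu=b$ for some $\nu$ together with $Ax=0$; parameterising $x=Nc$ and pre-multiplying the first equation by $N^\top$ annihilates $A^\top \nu$ and yields $(N^\top L^{-1}N)c=N^\top b$, so that
\begin{equation*}
M = N(N^\top L^{-1}N)^{-1}N^\top.
\end{equation*}
Strong monotonicity of $L^{-1}=\nabla_x F$ implies $c^\top (N^\top L^{-1}N)c=(Nc)^\top L^{-1}(Nc)>0$ for $c\neq 0$, so $N^\top L^{-1}N$ and its inverse are positive definite in the quadratic-form sense of Definition~\ref{def:mon}. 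For arbitrary $v$, setting $u:=N^\top v$ yields $v^\top M v=u^\top (N^\top L^{-1}N)^{-1}u\ge 0$, which is the desired conclusion.

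The main obstacle is precisely this last step: because strong monotonicity is only a quadratic-form condition and $\nabla_x F$ need not be symmetric, the direct Schur-complement expression for $M$ does not make positive semidefiniteness apparent, and the usual symmetric square-root trick is unavailable. The null-space reformulation sidesteps this by reducing the claim to positive definiteness of $N^\top L^{-1}N$, which uses only the quadratic-form hypothesis actually provided by Assumption~\ref{ass:SMON}. A secondary, more routine issue is verifying that the active set is locally constant at $\bar y$, which is standard under strict complementarity and the continuity provided by Assumptions~\ref{cost} and~\ref{ass:SMON}.
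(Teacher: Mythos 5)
Your derivation of the sensitivity formula is essentially the paper's: both arguments freeze the active set (justified by strict complementarity and local continuity of the primal--dual solution, which the paper delegates to Friesz's Theorem 7.16), write the reduced KKT system $F(x,y)+A^\top\nu=0$, $Ax=a$, and implicitly differentiate; your block inversion of $\mathcal J$ and the paper's sequential elimination of $\nabla_y\eta$ are the same computation. Where you genuinely diverge is the proof that $M\succeq0$. The paper observes that $M$ is the Schur complement of $ALA^\top$ in
\begin{equation*}
\begin{pmatrix} L & LA^\top \\ AL & ALA^\top \end{pmatrix}
=\begin{pmatrix} I \\ A\end{pmatrix} L \begin{pmatrix} I & A^\top\end{pmatrix},
\end{equation*}
whose quadratic form is $(u+A^\top v)^\top L (u+A^\top v)\ge0$, and then invokes the Schur-complement implication, which does survive the lack of symmetry: choosing $v=-(ALA^\top)^{-1}ALu$ in the block quadratic form gives exactly $u^\top M u$, with no square root of $L$ needed. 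So the ``main obstacle'' you identify is not actually an obstacle for the direct route. Your alternative---parameterising $\ker A$ by $N$ and showing $M=N(N^\top L^{-1}N)^{-1}N^\top$---is nevertheless correct and arguably more transparent: it reduces positive semidefiniteness of $M$ to positive definiteness of $N^\top(\nabla_xF)N$, which follows immediately from strong monotonicity restricted to the tangent space of the active constraints, and it makes explicit that $M$ acts as an ``inverse of $\nabla_x F$ projected onto $\ker A$.'' One bookkeeping point worth making explicit in a full write-up: the solvability of $L^{-1}x+A^\top\nu=b$, $Ax=0$ for $(x,\nu)$, which your identity presupposes, follows from the invertibility of $\mathcal J$ that you have already established, so the two expressions for the top-left block of $\mathcal J^{-1}$ must coincide. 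Both proofs are valid; yours trades the paper's two-line Schur-complement factorisation for a slightly longer but more structurally informative argument.
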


\begin{proof}
The fact that  under Assumptions \ref{constraints}, \ref{cost} and \ref{ass:SMON} the Nash equilibrium $x^\star(y)$ of the game $\mc{G}(y)$ exists and is unique comes from existence and uniqueness of the solution of VI$(\mathcal{X},F(\cdot,y))$, \cite{facchinei2007finite}.
Set $w(y)=[x^\star(y); \lambda(y);\mu(y)]$ where $\lambda(y),\mu(y)$ are the multipliers associated with $x^\star(y)$, as in \eqref{KKT}. The fact that, under Assumptions \ref{constraints}, \ref{cost},  \ref{ass:SMON} and \ref{cq}, $w(y)$ and thus $x^\star(y)$ are  locally unique and  differentiable is known,  \cite[Theorem 7.16]{friesz2016foundations}.
The formula for the derivative of the  vector $w(y)$  provided therein is
$$\nabla_y w(\bar y)=-[\nabla_w \Gamma(w,y)]^{-1}[\nabla_y \Gamma(w,y)]\mid _{\{w=w(\bar y), y=\bar y\}},$$
where $\Gamma(w,y)$ is the KKT system obtained by considering only the equalities in \eqref{KKT1} and \eqref{KKT2}, see  \cite[Eq. (7.126)]{friesz2016foundations} for more details. Note that in general the formula for $\nabla_y x^\star(\bar y)$ obtained from the first block component of $\nabla_y w(\bar y)$ depends on the dual variables $\lambda(\bar y),\mu(\bar y)$.
By considering only the active constraints instead of the whole KKT system we here derive   an equivalent formula for $\nabla_y x^\star(\bar y)$ that does not depend explicitly on such dual variables. This primal reformulation allows us to derive the  results discussed in the next sections.
To this end, we note that if Assumption~\ref{cq} holds for a fixed $\bar y\in\mathcal{Y}$ then, by Theorem 7.16 in \cite{friesz2016foundations}, 
  in a neighbourhood of $\bar y$, the set of active constraints is unchanged, strict complementarity holds and the active constraints can thus  be rewritten as $Ax=a$. Consequently, a vector $x^\star(y)$ is a Nash equilibrium in a neighbourhood of $\bar y$ if and only if there exists a multiplier $\eta(y):=[\lambda_{>0}(y); \mu(y)]$ such that the following reduced KKT system holds
\begin{equation}\label{KKT_local}
\begin{aligned}
F(x^\star(y),y)+A^\top\eta(y)=0\\
Ax^\star(y)-a=0
\end{aligned}
\end{equation}
where $\lambda_{>0}(y)$ is the subvector of $\lambda(y)$ obtained by selecting only the components that are strictly positive.
Since $w(y)$ is differentiable in $y$, we can  differentiate \eqref{KKT_local} with respect to $y$ 
thus obtaining
\begin{align*}\label{KKT_local_derivative}
\nabla_x F(x^\star(y),y) \nabla_y x^\star(y) + \nabla_y F(x^\star(y),y) +A^\top \nabla_y  \eta(y)=0\\
A \nabla_y x^\star(y)=0
\end{align*}
From the first equivalence it follows
\begin{equation}\label{step1}
\nabla_y x^\star(y) = -L[ \nabla_y F(x^\star(y),y) +A^\top \nabla_y  \eta(y)].
\end{equation}
Note that $L$ is well defined since $F(x,y)$ strongly monotone implies that $\nabla_x F(x,y)$ is positive definite and thus invertible.
Substituting \eqref{step1} in the second equation we get
$$ \nabla_y  \eta(y)=- (ALA^\top)^{-1}[AL \nabla_y F(x^\star(y),y)]$$
and finally substituting back into \eqref{step1}, we get formula \eqref{sensitivity}.  Note that $ALA^\top$ is invertible since $A$ has full row rank by Assumption \ref{cq}.1.
By Shur complement $M\succeq0$ iff the matrix
$\begin{ps} L & LA^\top \\ AL & ALA^\top
\end{ps} = \begin{ps} I \\ A
\end{ps}  L \begin{ps} I & A^\top
\end{ps} \succeq 0.$
The last inequality holds since $L$ is the inverse of a positive definite matrix.\end{proof}

\section{Sufficient condition for strong monotonicity in NAGs}
\label{sec:SMON}

In this section we consider a fixed  parameter  $y$ and we  study the properties of the mapping $x\mapsto F(x,y)$ associated with the  specific NAG $\mathcal{G}(y)$.
We note that continuous differentiability of $F(x,y)$ is a trivial consequence of Assumptions \ref{constraints} and~\ref{cost}.

Our second main result is to derive two sufficient conditions, in terms of  game primitives only, for the operator of a NAG to be strongly monotone, thus guaranteeing Assumption \ref{ass:SMON}.
To this end, let us define
\begin{align}\label{k1}
\kappa_1(x,y)&:=\min_{i}\quad\ \   \lambda_{\textup{min}}( \nabla^2_{x^i}J^i(x^i,z^i,y^i)\mid_{z^i=z^i(x)} )\\ \label{k2}
\kappa_2(x,y)&:=\max_{i}\quad  \|\nabla^2_{x^iz^i}J^i(x^i,z^i,y^i)\mid_{z^i=z^i(x)} \|,
\end{align}
where $\nabla^2_{x^i}J^i(x^i,z^i,y^i)$ is the Hessian of the cost function of player $i$ with respect to $x^i$ while $\nabla^2_{x^iz^i}J^i(x^i,z^i,y^i)$ is the Jacobian of $\nabla_{x^i} J^i(x^i,z^i,y^i)^\top$ with respect to $z^i$.
Note that by Assumption \ref{cost}, the Hessian $\nabla^2_{x^i}J^i(x^i,z^i,y^i)\mid_{z^i=z^i(x)} \succ 0$ hence $\kappa_1(x,y)>0$ for all $x$ and $y$.
We state the two sufficient  conditions in the next assumption.
\begin{assumption}[Conditions for strong monotonicity]\label{mon_suff}
Consider a fixed $y\in\mathcal{Y}$. Suppose that
\begin{align}\label{alpha}
\alpha(y):=\min_x (\kappa_1(x,y) - \kappa_2(x,y) w(P) )>0
\end{align}
where
\begin{enumerate}
\item $w(P)=|\lambda_{\textup{min}}(P)|$ if the following  conditions hold\footnote{
Note that under condition $1$-a) $P$ is a symmetric, non-negative matrix with zero diagonal. Since $P=P^\top $ all its eigenvalues are real.
By Perron-Frobenius theorem $\lambda_{\textup{max}}(P)=\rho(P)>0$. Moreover,  since $P$ has zero diagonal, $\mbox{Tr}(P)=\sum_i \lambda_i(P)=0$. Consequently, it must be $\lambda_{\textup{min}}(P)<0$.
}
\begin{enumerate}
\item $P=P^\top$
\item   $\nabla^2_{x^iz^i}J^i(x^i,z^i,y^i)\mid_{z^i=z^i(x)}=\kappa_2(x,y) I_n$ for all $i$, for all $x\in\mathcal{X}$.
\end{enumerate} 

\item 
$ w(P)=   \|P\| $, otherwise.
\end{enumerate}
\end{assumption}

\begin{theorem}[Continuity and strong monotonicity]\label{thm:mon}
If  Assumptions \ref{constraints} and~\ref{cost} hold then the operator $x\mapsto F(x,y)$ as defined in \eqref{eq:F} is continuously differentiable in $x$.  If additionally  Assumption \ref{mon_suff} holds for a given  $y\in\mathcal{Y}$, then  $x\mapsto F(x,y)$  is strongly monotone in $\mathcal{X}$, with monotonicity constant $\alpha(y)$. 
\end{theorem}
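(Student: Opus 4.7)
The plan is to first dispatch continuous differentiability, and then reduce strong monotonicity to a pointwise positive semidefiniteness bound on the symmetric part of $\nabla_x F(x,y)$ via Definition~\ref{def:mon}. Continuous differentiability of $x\mapsto F(x,y)$ in $x$ follows immediately: $z^i(x)=\sum_j P_{ij}x^j$ is linear in $x$, and by Assumption~\ref{cost} the map $[x^i;z^i]\mapsto J^i(x^i,z^i,y^i)$ is twice continuously differentiable, so each block of $F$ is $C^1$ in $x$ by the chain rule.

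For strong monotonicity I would compute $\nabla_x F(x,y)$ blockwise. The $(i,j)$ block of $\nabla_x F$ is $\partial F^i/\partial x^j$ with $F^i=\nabla_{x^i}J^i(x^i,z^i(x),y^i)^\top$. Using $\partial z^i/\partial x^j = P_{ij}I_n$ and $P_{ii}=0$, I obtain that the diagonal block equals $\nabla^2_{x^i}J^i$ while the off-diagonal $(i,j)$ block equals $P_{ij}\nabla^2_{x^iz^i}J^i$. In compact notation,
\begin{equation*}
\nabla_x F(x,y)=H(x,y)+D(x,y)\,(P\otimes I_n),
\end{equation*}
where $H(x,y):=\mbox{blkdiag}(\nabla^2_{x^i}J^i)$ and $D(x,y):=\mbox{blkdiag}(\nabla^2_{x^iz^i}J^i)$. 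Since each Hessian block of $H$ is symmetric, the symmetric part of $\nabla_x F$ is
\begin{equation*}
\tfrac{1}{2}(\nabla_x F+\nabla_x F^\top)=H+\tfrac{1}{2}\bigl[D(P\otimes I_n)+(P^\top\otimes I_n)D^\top\bigr].
\end{equation*}
By definition of $\kappa_1$, the first summand satisfies $H(x,y)\succeq \kappa_1(x,y)I_{Nn}$, so it remains to lower bound the minimum eigenvalue of the cross term by $-\kappa_2(x,y)\,w(P)$ in each of the two scenarios of Assumption~\ref{mon_suff}.

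In case $1$, the hypothesis $\nabla^2_{x^iz^i}J^i = \kappa_2(x,y) I_n$ collapses $D$ to $\kappa_2(x,y) I_{Nn}$, so the cross term becomes $\kappa_2(x,y)\cdot\tfrac{1}{2}[(P+P^\top)\otimes I_n]=\kappa_2(x,y)(P\otimes I_n)$, exploiting $P=P^\top$. The spectrum of $P\otimes I_n$ is $\Lambda(P)$ with each eigenvalue repeated $n$ times, hence its minimum eigenvalue is $\lambda_{\min}(P)=-|\lambda_{\min}(P)|=-w(P)$, and the cross term is bounded below by $-\kappa_2(x,y)w(P)\,I_{Nn}$. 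In case $2$, I would use the standard inequality $\|\tfrac{1}{2}(A+A^\top)\|\le\|A\|$ with $A=D(P\otimes I_n)$, together with submultiplicativity and $\|P\otimes I_n\|=\|P\|$, to get $\|\tfrac{1}{2}(D(P\otimes I_n)+(P^\top\otimes I_n)D^\top)\|\le \|D\|\cdot\|P\|=\kappa_2(x,y)\,w(P)$, which again gives a lower bound of $-\kappa_2(x,y)w(P)\,I_{Nn}$.

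Combining the two bounds yields, for every $x\in\mathcal{X}$,
\begin{equation*}
\tfrac{1}{2}(\nabla_x F(x,y)+\nabla_x F(x,y)^\top)\succeq (\kappa_1(x,y)-\kappa_2(x,y)w(P))\,I_{Nn}\succeq \alpha(y)\,I_{Nn},
\end{equation*}
which by the equivalent characterization in Definition~\ref{def:mon} gives strong monotonicity with constant $\alpha(y)$. The main obstacle is the asymmetric case $2$: one cannot diagonalize $D(P\otimes I_n)$ cleanly, so the spectral information is lost and the operator norm gives a coarser bound. It is precisely this looseness that forces the tighter hypotheses in case $1$ (symmetric $P$ and isotropic cross-Hessian) in order to recover the sharper eigenvalue-based quantity $|\lambda_{\min}(P)|\le\|P\|$.
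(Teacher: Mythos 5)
Your proposal is correct and follows essentially the same route as the paper: the identical decomposition $\nabla_x F = \mbox{blkdiag}(\nabla^2_{x^i}J^i) + \mbox{blkdiag}(\nabla^2_{x^iz^i}J^i)(P\otimes I_n)$, the bound $\kappa_1 I$ on the diagonal part, the exact eigenvalue computation in case 1 and the operator-norm/submultiplicativity bound in case 2. The only cosmetic difference is that the paper passes through the spectral radius before invoking the norm bound in case 2, whereas you apply $\|\tfrac12(A+A^\top)\|\le\|A\|$ directly.
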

\begin{proof} By assumption, $[x^i;z^i]\mapsto J^i(x^i,z^i,y^i)$ is twice continuously  differentiable hence $[x^i;z^i]\mapsto \nabla_{x^i}J^i(x^i,z^i,y^i)$ is continuously  differentiable in $[x^i;z^i]$. Note that $z^i(x)=\sum_{j\in\mathcal{N}^i} P_{ij} x^j$ is a linear function of $x$. Consequently, $x\mapsto \nabla_{x^i} J^i(x^i,\q^i(x),y^i)$ is continuously differentiable in $x$. This proves that $x\mapsto F(x,y)$ is continuously differentiable for $x\in\mathcal{X}$. 
To prove strong monotonicity, we start by noticing that the operator $\nabla_x F(x,y)$ can be rewritten as follows
\begin{align}\label{eq:grad_F}
\nabla_x F(x,y)= D(x,y)+ K(x,y) G,
\end{align}
where 
\begin{align}
D(x,y)&:=\mbox{blkdiag}[\nabla^2_{x^i}J^i(x^i,z^i,y^i)\mid_{z^i=z^i(x)}]_{i=1}^N \\
K(x,y)&:=\mbox{blkdiag}[ \nabla^2_{x^iz^i}J^i(x^i,z^i,y^i)\mid_{z^i=z^i(x)} ]_{i=1}^N \\
G&:=P\otimes I_n.
\end{align}
Note that by the properties of the Hessian $D(x,y)=D(x,y)^\top$. Moreover by definition \eqref{k1},  $D(x,y)\succeq \kappa_1(x,y) I$ for all $x\in\mathcal{X}$ and all $y\in\mathcal{Y}$.  By \eqref{eq:grad_F}  it then holds
\begin{multline*}
\textstyle \frac{\nabla_x F(x,y)+\nabla_x F(x,y)^\top}{2}\succeq   \kappa_1(x,y) I +\frac{K(x,y) G+ G^\top K(x,y)^\top}{2}.
\end{multline*}

Note that
\begin{itemize}
\item under Assumption \ref{mon_suff}.1)  the matrices $K(x,y)$ and $G$ are symmetric, moreover $K(x,y)=\kappa_2(x,y) I_{Nn}\succeq 0$. 
By assumption, $P$ is a non-negative, symmetric matrix with zero diagonal. It follows that $\lambda_{\textup{min}}(P)<0$ and $\lambda_{\textup{max}}(P)=\rho(P)$, as detailed in the footnote $2$. Moreover, by the properties of the Kronecker product,   $\Lambda(G)=\Lambda(P)$.  Hence $\lambda_{\textup{min}}(G)=\lambda_{\textup{min}}(P)=-|\lambda_{\textup{min}}(P)|$. Consequently,

\begin{equation}\label{eq:key_step}
\begin{aligned}
&\textstyle \lambda_{\textup{min}}\left(\frac{K(x,y) G+ G^\top K(x,y)^\top}{2}\right)= \kappa_2(x,y) \textstyle \lambda_{\textup{min}}\left(\frac{G+ G^\top }{2}\right)\\
 &= \kappa_2(x,y) \textstyle \lambda_{\textup{min}}(G)=-|\lambda_{\textup{min}}(P)| \|K(x,y)\|.
\end{aligned}
\end{equation}

\item Under Assumption \ref{mon_suff}.2)
\begin{align*}
&\lambda_{\textup{min}}\left(\frac{K(x,y) G+ G^\top K(x,y)^\top}{2}\right)\\ &\ge -\rho\left(\frac{K(x,y) G+ G^\top K(x,y)^\top}{2}\right)  \\
&= - \|\frac{K(x,y) G+ G^\top K(x,y)^\top}{2}\|\\
&\ge - \frac{\|K(x,y) G\|+ \|G^\top K(x,y)^\top\|}{2} \\
&= - \|K(x,y) G\| \ge  - \| P\| \|K(x,y)\| .\end{align*}
where we used the fact that for symmetric matrices the spectral radius equals the $2$-norm and $\|G\|=\|P \otimes I_n\|=\|P\|$.
\end{itemize}
Hence in all cases it holds
\begin{align*}
\lambda_{\textup{min}}\left(\frac{K(x,y) G+ G^\top K(x,y)^\top}{2}\right)\ge - w(P)\|K(x,y)\| .
\end{align*}
Note that 
$$\|K(x,y)\| \le  \max_{i} \|\nabla^2_{x^iz^i}J^i(x^i,z^i,y^i)\mid_{z^i=z^i(x)}\|  =  \kappa_2(x,y), $$
since the norm of a block diagonal matrix equals   the largest norm of its blocks.
Using the last result we can immediately see that under Assumption \ref{mon_suff}
 \begin{align*}
&\textstyle \lambda_{\textup{min}}(\kappa_1(x,y) I + \frac{K(x,y) G+ G^\top K(x,y)^\top}{2}) \\&\textstyle = \kappa_1(x,y) + \lambda_{\textup{min}}(\frac{K(x,y) G+ G^\top K(x,y)^\top}{2}) \\&\ge \kappa_1(x,y) - \kappa_2(x,y) w(P) \ge \alpha(y).
\end{align*}
Hence \begin{equation}\label{eq:inter} \begin{aligned}
\textstyle \frac{\nabla_x F(x,y)+\nabla_x F(x,y)^\top}{2} &\textstyle \succeq \lambda_{\textup{min}}( \kappa_1(x,y) I +\frac{K(x,y) G+ G^\top K(x,y)^\top}{2}) I \\&\succeq \alpha(y) I.
\end{aligned} 
\end{equation}
Equation \eqref{eq:inter} implies that $F(x,y)$ is strongly monotone with constant $\alpha(y)$.
\end{proof}

\section{ Quadratic network games}
\label{systemic_risk}

\subsection{ The setting}

Consider the specific class of NAGs with cost functions 
\begin{equation}\label{nag_q}
J^i(x^i,\q^i(x),y^i)= \frac{1}{2} (x^i)^\top I_n x^i - f(\q^i(x)+ y^i)^\top x^i,
\end{equation}
for some $f:\R^n\rightarrow \R^n.$
Games with such cost functions are known in the literature under the name of quadratic network games \cite[Section 2.1]{acemoglu2015networks} (or linear-quadratic network games if additionally the interaction function is linear \cite[Section 4.1]{jackson2014games}).
For such class of games, the parameter $y^i\in\mathcal{Y}^i\subseteq \R^n$ usually represents a localized shock or perturbation to the payoff of agent $i$. In the following, we  assume that these shocks $y^i$ are small, independently and identically distributed and have mean $\hat y$.

If the strategy of the agents are \textit{unconstrained} (i.e. $\mathcal{X}^i=\R^n$) and $P_{ii}=0$ for all $i\in[1,N]$ then each agent's best response can be characterized explicitly as
$$x^i_{\textup{br}}(\q^i(x),y^i)=f(\q^i(x)+ y^i).$$
By using this reformulation, the recent work  \cite{acemoglu2015networks} conducted sensitivity analysis of the Nash equilibrium and studied volatility of aggregates to shocks (perturbations) in unconstrained quadratic games. We briefly recap such results in Section \ref{unconstrained} and show that they can be recovered as a special case of our main formula \eqref{sensitivity}.

In Section \ref{constrained} we exploit formula \eqref{sensitivity} to extend the results of \cite{acemoglu2015networks} to the  \textit{constrained} case. In this setting, constraints emerge, for example, when a central authority has an objective function that depends on the exact structure of the network (such as minimizing the difference between the actions of a subset of agents over  part of the network or keeping the actions or beliefs of certain agents unchanged).  Following \cite{acemoglu2015networks},   we model the \textit{aggregate  output of the system} as
$$s(y)=g(h(x^{\star 1}(y))+\ldots+h(x^{\star N}(y))),$$
where $h:\R^n\rightarrow \R^n$ and $g: \R^n\rightarrow \R$.
Note that, in the absence of shocks the Nash equilibrium is $x^\star(0)$ with associated output $s(0)$.  Suppose now that the central authority can intervene and guarantee that, no matter what the realization of the shocks,  one agent does not change its strategy.\footnote{We focus on one agent for simplicity. Similar results can be derived also for the case when the central authority can influence more than one agent.} This intervention can be modeled by  adding the constraint $x^{k}=x^{\star k}(0)$ to the game. Note that this constraint does not modify the unperturbed system output, but modifies how the system reacts to shocks. Our objective is to address the following question: \textit{Which agent should be constrained by the central authority in order to minimize the effect of the shocks on the system output $s(y)$ at the new equilibrium?} By exploiting formula \eqref{sensitivity} we  provide an answer to this question both from an ex-post perspective (i.e., if the central authority knows the actual realization $y$ of the shock) and from an ex-ante perspective (i.e., if the authority must commit to his decision before the shock is realized).

In the following we focus our discussion on the scalar case, formula \eqref{sensitivity} however allows an immediate generalization of our results to the multidimentional case.  We  also assume the following regularity conditions and normalizations.
\begin{assumption}(Network games)\label{network} The adjacency matrix $P$ of the network is non-negative and, 
for each $i\in\N[1,N]$, $P_{ii}=0$.  Moreover, $f(0)=g(0)=h(0)=0$. For each $w\in \R$, $0\le f'(w)\le\gamma$ and $\gamma \|P\|<1$. 
\end{assumption}
Note that the previous assumption  implies that the game is normalized so that, in the absence of shocks, the equilibrium state $x^{\star}(0)$ and the  output of the system $s(0)$ are both zero. Moreover, Theorem \ref{thm:mon} guarantees that under Assumption \ref{network}, Assumption \ref{ass:SMON} is satisfied.

\subsection{The  unconstrained case}\label{unconstrained}

In \cite{acemoglu2015networks} the following formula for  the Nash sensitivity  in unconstrained quadratic network games is derived 
\begin{align}\label{sensitivity_scalar}
&\nabla_{y^i} x^\star(0)= f'(0) [I-f'(0)P]^{-1}e_i =:f'(0)L e_i, \\
&\mbox{ so that }  \quad
\textstyle \frac{\partial x^{\star j}(0)}{ \partial y^i}  =f'(0)L_{ji}.
\end{align}
The matrix $L:=[I-f'(0)P]^{-1}$ is known in the economic literature as  \textit{Leontief matrix}.
It follows from Assumption~\ref{network}  that $L$ is well defined (i.e. the inverse exists). 
By using  Neumann series it is also easy to see that $L=[I-f'(0) P]^{-1}=\sum_{l=0}^\infty f'(0)^l P^l$. Consequently, the element $L_{ji}$ is the sum over all the possible paths from $i$ to $j$ of the product of the edge weights in the path  discounted by the factor $f'(0)^l$, where $l$ is  the path length. Intuitively  the element $L_{ji}$ describes the ways in which a shock applied to agent $i$  propagates to agent~$j$. It is then immediate to see that $L$ has nonnegative entries and that $L_{ii}\ge 1$ (take $l=0$).

The output sensitivity to a shock to agent $i$  derived in  \cite{acemoglu2015networks} is
\begin{equation}\label{sens_out_unc}
\textstyle \frac{\partial s(0)}{\partial y^i}=f'(0)g'(0)h'(0)\sum_{m=1}^N L_{mi}=:\alpha v^i
\end{equation}
where $\alpha:=f'(0)g'(0)h'(0)$. 
It is evident from \eqref{sens_out_unc} that whether a  positive shock $y^i$ to agent $i$ locally increases or decreases the output depends only on the sign of  $\alpha$ or in other words on the convexity/concavity pattern of the functions $g,h$ at the origin  \cite{acemoglu2015networks}. The strength of such effect, on the other hand, depends  on the parameter $v^i:=\sum_{m=1}^N L_{mi}$, which is the  Bonacich centrality of agent $i$.  

\begin{definition}[Bonacich centrality]
The element
$\textstyle v^i:=\sum_{m=1}^N L_{mi},$
obtained by summing  the entries of the Leontief matrix in column $i$, is the  \textit{Bonacich centrality} of agent $i$. 
\end{definition}

 Note that  $v^i$ is always positive and corresponds to the weighted sum of the paths through which the shock can propagate from agent $i$ to any other agent in the network.

\subsection{ The  constrained case}\label{constrained}

Theorem~\ref{thm:sens}  allow us to generalize  formula \eqref{sensitivity_scalar} to the constrained case. Specifically, under Assumptions \ref{constraints} to~\ref{cq} we obtain
\begin{equation}\label{sensitivity_scalar_constrained}
\begin{aligned}
\nabla_{y^i} x^\star(0)&=  f'(0) [L-LA^\top [ALA^\top]^{-1}AL] e_i\\
&=  f'(0) Le_i -f'(0)LA^\top [ALA^\top]^{-1}ALe_i, 
\end{aligned}
\end{equation}
where $L$ is  the same Leontief matrix of equation \eqref{sensitivity_scalar}. 
By comparing  \eqref{sensitivity_scalar} and \eqref{sensitivity_scalar_constrained}, one can immediately see that the presence of the constraints modifies the sensitivity  in an additive fashion. Similarly, the output sensitivity becomes
\begin{equation}\label{sensitivity_output_constained}
\textstyle \frac{\partial s(0)}{\partial y^i}=\alpha [v^i -  \mathbbm{1}^\top LA^\top [ALA^\top]^{-1}AL e_i ].
\end{equation}
We then see that the unconstrained case (i.e. when the $A$ matrix is empty) is just a special case of \eqref{sensitivity_scalar_constrained} and
\eqref{sensitivity_output_constained}.

We next consider the  case when the central authority adds the specific constraint $x^k=x^{\star k}(0)=0$, as motivated in the introduction of this section.   Note that since this is an equality constraint Assumption \ref{cq} is always  met and $A=e_k^\top$. If agent $k$ is constrained, a shock to an arbitrary agent $i$ affects an agent $j$ according to 

$$ \frac{\partial x^{\star j}(0 \mid k)}{\partial y^i}=  f'(0) \left[L_{ji}-\frac{L_{jk} L_{ki}}{L_{kk}}\right]. $$
Note that  the terms $L_{ji}$ and $\frac{L_{jk} L_{ki}}{L_{kk}}$ have always the same sign, therefore fixing the strategy of an agent has the effect of reducing the sensitivity to the shock for the others. The relevance of such effect depends on the weight of the paths from $i$ to $k$ and from $k$ to $j$, normalized by the weight of the  loops passing through $k$. The output sensitivity becomes 
\begin{equation}\label{sen_out}
\frac{\partial s (0\mid k)}{\partial y^i}=\alpha [\sum_{m=1}^N L_{mi} -  \sum_{m=1}^N  \frac{L_{mk} L_{ki}}{L_{kk}}  ]=:\alpha[v^i-v^i_k]
\end{equation}
where we define $v^i_k:=v^k\frac{L_{ki}}{L_{kk}}$.
\begin{lemma}\label{lem:vi}
For all $i,k\in\N[1,N]$, $v^i\ge v^i_k$.
\end{lemma}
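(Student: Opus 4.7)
The plan is to reduce Lemma \ref{lem:vi} to the pointwise bound
\begin{equation}\label{lem3_pw}
L_{kk}\,L_{ji} \;\geq\; L_{jk}\,L_{ki} \qquad \forall\, i,j,k\in\N[1,N],
\end{equation}
and then sum \eqref{lem3_pw} over $j$: using $v^i=\sum_j L_{ji}$ and $v^k=\sum_j L_{jk}$, this gives $L_{kk}\,v^i\geq v^k L_{ki}$, i.e.\ $v^i\geq v^k L_{ki}/L_{kk}=v^i_k$, which is the claim.

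To prove \eqref{lem3_pw}, set $M:=f'(0)P$. Assumption \ref{network} yields $M\geq 0$ and $\|M\|\leq\gamma\|P\|<1$, so $L=(I-M)^{-1}=\sum_{\ell\geq 0}M^\ell\geq 0$. Fix $k$ and let $\bar k:=\N[1,N]\setminus\{k\}$. The principal submatrix $M_{\bar k\bar k}$ is nonnegative, and by the standard monotonicity of the spectral radius under principal submatrices $\rho(M_{\bar k\bar k})\leq\rho(M)<1$, so $N:=(I-M_{\bar k\bar k})^{-1}\geq 0$ is well defined. Applying the block inversion formula to $I-M$ with respect to the partition $\{k\}\cup\bar k$ (equivalently, taking the Schur complement of the $(\bar k,\bar k)$-block) yields
\begin{align*}
L_{kk}^{-1}&=1-M_{kk}-M_{k,\bar k}\,N\,M_{\bar k,k},\\
L_{jk}&=L_{kk}(N\,M_{\bar k,k})_j,\quad L_{ki}=L_{kk}(M_{k,\bar k}\,N)_i,\\
L_{ji}&=N_{ji}+L_{kk}(N\,M_{\bar k,k})_j(M_{k,\bar k}\,N)_i,\quad i,j\in\bar k.
\end{align*}
Substituting these into $L_{ji}-L_{jk}L_{ki}/L_{kk}$ collapses to $N_{ji}\geq 0$ for $i,j\in\bar k$, while the corner cases $i=k$ or $j=k$ turn \eqref{lem3_pw} into an identity. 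Hence \eqref{lem3_pw} holds in all cases.

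The computation is routine once the block decomposition is set up; the only mild technical step is to verify $\rho(M_{\bar k\bar k})<1$ so that $N\geq 0$ exists, which is immediate from standard Perron-Frobenius monotonicity. I note that \eqref{lem3_pw} also admits a transparent probabilistic reading: viewing $M$ as the kernel of a subcritical random walk, $L_{ji}/L_{jj}$ is the probability of ever visiting $j$ starting from $i$, and \eqref{lem3_pw} is the obvious inequality ``prob.\ of reaching $j$ from $i$ $\geq$ (prob.\ of reaching $k$ from $i$)$\times$(prob.\ of reaching $j$ from $k$)'' obtained by decomposing walks at the first visit to $k$; formalising this gives an alternative proof but the Schur-complement route above is shorter.
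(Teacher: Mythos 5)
Your proof is correct, and its skeleton coincides with the paper's: both arguments reduce the claim to the pointwise inequality $L_{kk}L_{ji}\ge L_{jk}L_{ki}$ and then sum over the row index. The difference is in how that inequality is justified. The paper simply invokes it as a known property of Leontief matrices, citing \cite{zeng2001property}, whereas you derive it from scratch by partitioning the index set as $\{k\}\cup\bar k$, inverting $I-M$ blockwise, and observing that $L_{ji}-L_{jk}L_{ki}/L_{kk}$ collapses to the entry $N_{ji}\ge 0$ of $(I-M_{\bar k\bar k})^{-1}$. Your computation checks out (the formulas $L_{kk}=s^{-1}$, $L_{jk}=L_{kk}(NM_{\bar k,k})_j$, $L_{ki}=L_{kk}(M_{k,\bar k}N)_i$, $L_{ji}=N_{ji}+L_{kk}(NM_{\bar k,k})_j(M_{k,\bar k}N)_i$ are exactly the block-inverse identities, and the corner cases $i=k$ or $j=k$ are indeed trivial), and the prerequisites $\rho(M_{\bar k\bar k})\le\rho(M)<1$ and $L\ge 0$, $L_{kk}\ge 1$ all follow from Assumption 5 as you say. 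What your route buys is a self-contained proof that also makes the inequality's meaning transparent (it is the nonnegativity of the ``walks from $i$ to $j$ avoiding $k$'' term, matching your probabilistic reading); what the paper's route buys is brevity. No gaps.
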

\begin{proof}
Since $L$ is a Leontief matrix, for any $i,k,j\in\N[1,N]$ by \cite{zeng2001property} we have  
\begin{equation}
\textstyle L_{mi}L_{kk}-L_{mk}L_{ki}\ge 0 \Rightarrow L_{mi}\ge\frac{L_{mk}L_{ki}}{L_{kk}},
\end{equation}
where we used the fact that $L_{kk}\ge 1$.
By summing over all $m\in\N[1,N]$ we obtain the desired result. 
\end{proof}
Therefore constraining agent $k$ attenuates the sensitivity of the output but can never reverse its sign, which again depends only on the sign of $\alpha$.

By using \eqref{sen_out} we can approximate the ex-post output, given that agent $k$ was constrained, as
$$\textstyle s(y\mid k) \approx s(0)+ \sum_{i=1}^N \frac{\partial s(0\mid k)}{\partial y^i}y^i= \sum_{i=1}^N \alpha  [v^i-v^i_k] y^i.$$

In the following we are going to assume that $\alpha>0$ and shocks are non-negative $y^i\ge 0$. Similar results can be derived for the other cases. Under these assumptions and given Lemma~\ref{lem:vi}, $s(y\mid k) \ge 0$ for all $k\in\N[1,N]$. Since $s(0)=0$, to minimize the effect of the shocks
the central authority should constrain agent
$$\textstyle \bar k_{\textup{post}}:=\!\arg\!\min_k \alpha \sum_{i=1}^N   [v^i-v^i_k]  y^i =\! \arg\!\max_k \!  \sum_{i=1}^N \! v^i_k y^i. $$
Note that, to use this formula, the central authority  needs to know the realization $y$ of the shock.
If, on the other hand, the central authority needs to commit to a decision ex-ante then it should minimize the expected outcome

$$\textstyle \mathbb{E}_y[s(y\mid k)] \approx \sum_{i=1}^N \alpha  [v^i-v^i_k] \hat y,$$
where we used $\mathbb{E}[y^i]=\hat y$ for all $i\in\N[1,N]$, and thus constrain the opinion of agent 
\begin{align*}
\bar k_{\textup{ante}}&:=\textstyle\!\arg\!\min_k \quad \sum_{i=1}^N  \alpha  [v^i-v^i_k] \hat y =\! \arg\!\max_k \quad \! \sum_{i=1}^N \! v^i_k \\&=:\textstyle \arg\!\max_k \quad w^k .
\end{align*}
The quantity $w^k$ is  known as \textit{inter-centrality}  or \textit{key-player centrality}.
\begin{definition}[Key-player centrality]
The element
\begin{align}\label{w}
\textstyle w^k:=\sum_{i=1}^N v^i_k= \sum_{i=1}^N v^k\frac{L_{ki}}{L_{kk}} = \frac{ v^k }{L_{kk}} \sum_{i=1}^N L_{ki},
\end{align}
where $L$ is the Leontief matrix, is the \textit{inter-centrality}  or \textit{key-player} centrality of agent $k$.
\end{definition}
This centrality measure was originally derived in \cite{ballester2006s}, by using a   different approach, to identify the key-player to be removed in an 
 unconstrained quadratic-linear network game with strategic complements to maximize the  decrease in aggregated output at equilibrium.
 The application considered here is a different example of where such a measure can be applied. To sum up, the optimal strategy for the central authority is to target the agent with maximum {inter-centrality} and not the one with maximum Bonacich centrality, contrary to what one could have expected. 
 
The example given in Figure \ref{intuition} provides intuition for the difference between the two measures.\footnote{A special case when the ordering according to the two centrality  measures coincides is the case of a symmetric network (i.e. $L_{ki}=L_{ik}$) with no loops ($L_{kk}=1$). In that case
$
\textstyle w^k= \frac{ v^k }{L_{kk}} \sum_{i=1}^N L_{ik}  =\frac{ (v^k)^2 }{L_{kk}}=(v^k)^2.
$ In general, however, the ordering is different, as illustrated in Fig.~\ref{intuition}. } Therein we consider, for different values of $f'(0)$, a directed network without loops where all the edges have the same weight. Note that in the absence of loops $L_{ii}=1$ for each agent $i$, hence 
 $$v^i=\sum_{m=1}^N L_{mi},\quad  w^i= \left(\sum_{m=1}^N L_{mi}\right)\left(\sum_{m=1}^N L_{im}\right).$$
 Recall that $L_{mi}$ is the total weight of paths from node  $i$ to node $m$, discounted by $f'(0)^l$, where $l$ is the path length. 
 A first thing to note is that the Bonacich centrality $v^i$ only depends on \textit{out-going}  paths, while the inter-centrality $w^i$ depends on the product of both \textit{in-coming} and \textit{out-going} paths. This explains why the latter is the relevant measure to be considered in deciding which agent should be constrained. In fact, constraining  an agent has two effects. The first \textit{direct effect} is that the strategy of the agent influences all its followers, the second \textit{indirect effect} is that it blocks all the shocks coming from previous nodes and prevents then from spreading further over the network. It is then clear that the relevance of an agent as an influencer (direct effect) depends on its out-going paths, but its relevance as a blocker (indirect effect) depends on the in-coming paths. One should then aim at constraining an agent for which both  effects are relevant.
 A second thing to note is that how strong these effects are depends on how strong the  interaction is among the  agents (i.e. how large $f'(0)$ is).  From the interpretation of $f'(0)$ as a discount factor in $L_{mi}$ we see that, when $f'(0)$ is very small, only short paths are important. In the limit, for very small  $f'(0)$, the Bonacich centrality of an agent $i$ is its out-degree, while the inter-centrality is the product between out- and in-degree.\footnote{To be rigorous we should say the in-degree plus one and the out-degree plus one, to account for $L_{ii}=1$.} It is then clear why in Figure  \ref{intuition}, for the case $f'(0)=0.1$, the agent with maximum Bonacich centrality is the one on the right (with out-degree $6$) while there are two agents with same maximum inter-centrality (with product of out- and -in degree equals to $1\cdot 6$). On the other hand, the higher $f'(0)$ is, the more paths of higher length matters. This explains why for $f'(0)=1$ the agents with highest Bonacich centrality are the ones on the extreme left (i.e. the ones from which more paths are departing). Note that the extreme agents can never have high inter-centrality because they always have either very low in-path weight or out-path weight.
 
 \begin{figure}
\begin{center}
  \includegraphics[width=0.5\textwidth]{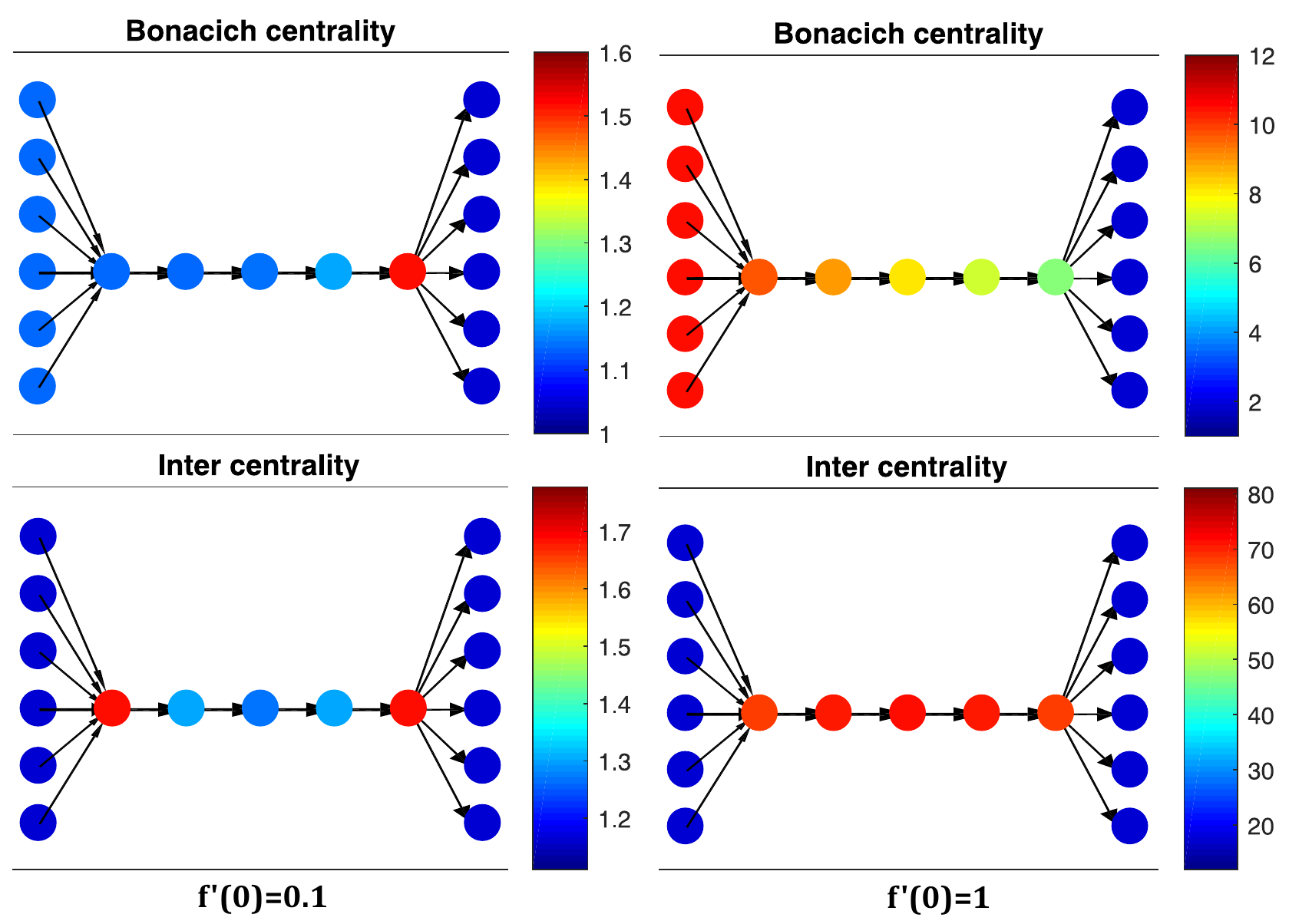}
 \end{center}
\caption{\small  Plot illustrating the difference between Bonacich and inter-centrality measures for different values of $f'(0)$. The node color is an indicator of  the centrality of the agents.}
\label{intuition}
\end{figure}

 While in this discussion we focused on the case when the central authority can constrain the opinion of a single agent, formula \eqref{sensitivity_output_constained} can  be used to perform the same analysis for any arbitrary number $K$ of agents, by setting $A=[e_{k_j}^\top]_{j=1}^K$.

\subsection{ Rumour propagation in opinion dynamics}
\label{opinion_dynamics_full}
As application of our theoretical results we 
consider a network where agents exchange and update  opinions regarding a certain topic according to the Friedkin and Johnsen \cite{friedkin1999social} model of opinion dynamics
\begin{equation}\label{fj}
\textstyle x^i_{(t+1)}= \frac{1}{1+\theta^i} \sum_{j=1}^N P_{ij} x^j_{(t)} +  \frac{\theta^i}{1+\theta^i} y^i.
\end{equation}
Here $x^i_{(t)}\in[0,1]$ denotes the opinion of agent $i$ at time $t$, $ y^i= x^i_{(0)}$ denotes its initial opinion and
 $\theta^i>0$ is a parameter that captures its  stubbornness (i.e., the weight that agent $i$ places on its initial opinion).  It was shown in \cite{ghaderi2014opinion} that if $P$ is non-negative, row stochastic, $P_{ii}=0$  and $\theta^i>0$ for all $i\in\N[1,N]$ the Friedkin and Johnsen  dynamics  eventually converge to the Nash equilibrium $x^\star(y)$ of the NAG\footnote{Note that if $P$ is row stochastic and the initial conditions are in the interval $[0,1]^N$ then under the dynamics in \eqref{fj} all the opinions remain between $[0,1]^N$. Consequently, the Nash equilibrium of \eqref{nag_opinion} is always in $[0,1]^N$  without the need of explicitly adding this constraint in the game.} with matrix $P$ and  cost 
\begin{equation}
\label{nag_opinion}
J^i(x^i,\q^i(x),y^i)=\|x^i-\q^i(x)\|^2+\theta^i \|x^i-y^i\|^2,
\end{equation}
where $z^i(x)=\sum_{j=1}^N P_{ij}x^j$.
Clearly,  for $y:=[y^i]_{i=1}^N=0$ all the agents have the same opinion at the Nash equilibrium which is $x^{\star i}(0)=0$. We here define rumours  as perturbations of the initial opinion of  certain agents from $0$ to $y^i>0$, where the $y^i$ are i.i.d. with average $\hat y$. According to the previously described model, at convergence, the  effect of the rumours is to modify the vector of final opinions from $x^\star(0)=0$ to $x^\star(y)\ge0$. We can then quantify how much the rumours have spread at the end of the process by 
$$\textstyle s(y):=\sum_{j=1}^N x^{\star j}(y).$$
The aim of this subsection is to characterise the opinion of which agent  should the central authority constrain to $0$ in order to minimize $s(y)$. For simplicity we consider the case  $\theta^i=\theta>0$ for all $i\in\N[1,N]$ and $P$  doubly stochastic, so that $\|P\|\le \sqrt{\|P\|_\infty \|P\|_1}=1$.

Note that, up to constant terms that do not depend on $x^i$ and some positive scaling, the cost function in \eqref{nag_opinion} can be rewritten as
\begin{align} \label{nag_opinion2}
\textstyle J^i(x^i,\q^i(x),\tilde y^i)= \frac 12 \|x^i\|^2 -\frac{1}{(1+\theta)}(\q^i(x)+ \tilde y^i) x^i,
\end{align}
where the $\tilde y^i:=\theta y^i$ are i.i.d. with average $\theta \hat y$.
Then \eqref{nag_opinion2} belongs to the class of NAGs discussed in this section with  $f(x)=x/(1+\theta)$, $g(h)=h$, $h(x^i)=x^i$,  $\alpha=\frac{1}{1+\theta}>0 $ and $F(x,\tilde y)=(I-\frac{1}{1+\theta} P)x-\frac{1}{(1+\theta)} \tilde y$.
Note that $\nabla_x F(x,\tilde y) = (I-\frac{1}{1+\theta} P)$ and $f'(x)=\frac{1}{1+\theta}=:\gamma<1$ for all $x\in\mathcal{X}$.  Consequently, $\gamma \|P\| \le \frac{1}{1+\theta}  <1$. Assumptions 1, 2 and \ref{network} are thus met. Assumption \ref{cq} is always met because we consider only equality constraints.  By the previous discussion the central authority should then constrain the opinion of the agent $k$ with maximum \textit{inter-centrality}. Note that since $F(x,y)$ is linear both in $x,y$ this result holds globally (i.e. even for large rumors).

\section{ Atomic routing  games with information constraints}
\label{traffic_sec}

\subsection{ The setting}

As a second  application, we consider  routing  games in which  $N$ atomic agents select how to split their traffic across  the available routes in a traffic network, with the goal of minimizing their travel time, under the assumption that the latter depends on the overall congestion level of the  network.

Specifically, we consider a directed graph $(\mathcal{V},\mathcal{E})$ where each node $v\in\mathcal{V}$ corresponds to a location and each directed edge $e=(u,v)\in\mathcal{E}$ corresponds to a road connecting $u$ to $v\neq u$. We define $H$ to be the node-edge incidence matrix of the road network. Equivalently, given any edge $e=(u,v)\in\mathcal{E}$, $H_{k,e}=-1$ if $k=u$, $H_{k,e}=1$ if $k=v$ and $H_{k,e}=0$ otherwise.

We assume that each agent $i\in\N[1,N]$ has a \textit{non-negligible} (atomic)\footnote{See \cite{roughgarden2007routing} for a distinction between atomic and non-atomic routing games.}  amount of flow $\eta^i>0$ that he needs to assign to the available routes between an  origin node $o^i\in\mathcal{V}$ and a destination node $d^i\in\mathcal{V}$. Note that the origin-destination pair may be different for each agent. Moreover, different from standard routing games, we assume that agent $i$ knows only a subset $\mathcal{E}^i \subseteq \mathcal{E}$ of the roads. Again this set may be different for each agent, see Figure \ref{net} for an example.

\begin{figure}
\begin{center}
\includegraphics[width=0.35\textwidth]{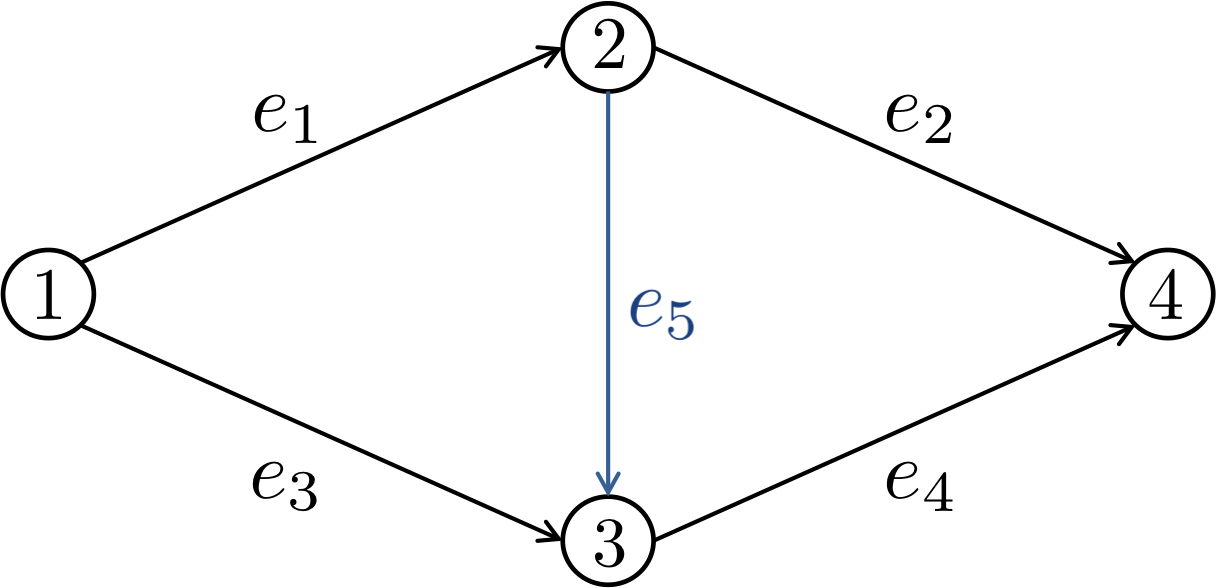}
\end{center}
\caption{\small Wheatstone road network. We  assume that  the agents have the same origin and destination, $o^i=1,d^i=4$ for all $i\in\N[1,N]$, but they have different sets of information. Specifically, some agents  know all the network (for them $\mathcal{E}^i:=\mathcal{E}:=\{1,2,3,4,5\}$) and some agents  do not know edge $5$ (for them $\mathcal{E}^i:=\{1,2,3,4\}\subset \mathcal{E}$). \vspace{-0.3cm}}
\label{net}
\end{figure}

To model these constraints let $x^i\in\R^{|\mathcal{E}|}$ be a vector whose component $x^i_e$ denotes the flow that agent $i$ is allocating on road $e$.  Then we say that a flow $x^i$ is feasible for agent $i$ if 
1) $x^i\ge 0$;
2) $Hx^i=h^i$, where $h^i$ is a vector whose entries are all $0$ except for the origin node $o^i$ that has entry $-\eta^i$ and the destination node $d^i$ that has entry $\eta^i$;
3) $x^i_e= 0$ for all $e\notin \mathcal{E}^i$.
Note that conditions 1) and 2) are the  feasibility conditions of the standard routing game, while condition 3) models the fact that the  agents might have different sets of information. 

Set $x:=[x^i]_{i=1}^N$ and let $\q(x):=\sum_{j=1}^N x^j$ be the vector of total edge flows. To model congestion effects we assume that the travel time for each edge is a  function $p_e(\q_e,y^e):\R_{\ge0}\times\R_{>0}\rightarrow \R_{>0}$ of the total edge flow $z_e$ which might depend on a parameter $y^e$ (i.e., number of lanes, speed limits, etc.). The travel time experienced by agent $i$ is therefore given by 
\begin{equation}\label{nag_traffic}
\textstyle J^i(x^i,\q(x),y):= \sum_{e=1}^{|\mathcal{E}|} p_e(\q_e(x),y^e) x^i_e.
\end{equation}
Note that in this application both the aggregate $z(x)$ and the parameter
 $y:=[y^e]_{e=1}^{|\mathcal{E}|}$ are the same for every agent. 
Since $z(x)$ depends on the sum of the flow vectors of all the agents this is a NAG with a matrix $P$ whose elements are all equal to one (not to be confused with the road network which appears in the constraints via the incidence matrix $H$). Our objective is to  study  the effect of the parameter $y$ on the total travel time 
\begin{equation} \label{totaltravel}
\begin{aligned}
s(y) &=\textstyle \sum_{e=1}^{|\mathcal{E}|}  p_e(\q_e(x^\star(y)),y^e) \q_e(x^\star(y))
\\&=: p(\q(x^\star(y)),y)^\top \q(x^\star(y))
\end{aligned}
\end{equation}
at the Nash equilibrium $x^\star(y)$ when information constraints such those  in 3) are present. Note that we defined $p(z,y):=[p_e(z_e,y^e)]_{e=1}^{|\mathcal{E}|}$. For convenience we rephrase here Definition~\ref{nash} for the case of  routing games.

\begin{definition}[Nash equilibrium]
A set of strategies $\{ x^{\star i}(y) \in \mathcal{X}^i\}_{i=1}^N$ is a Nash equilibrium for the routing game with parameter $y:=[y^e]_{e=1}^{|\mathcal{E}|}$ if for all players $i\in\N[1,N]$ and all strategies $x^i\in\mathcal{X}^i$ we have
\begin{equation}\label{traffic_nash}
\sum_{e=1}^{|\mathcal{E}|} p_e(\q^\star_e(y),y^e) x^{\star i}_e(y) \le \sum_{e=1}^{|\mathcal{E}|} p_e( x^i_e +\sum_{j\neq i}  x^{\star j}(y),y^e) x^i_e,
\end{equation}
where ${\q}^\star_e(y):= \sum_{j=1}^N  x^{\star j}_e(y)$ is the total edge flow and $$\mathcal{X}^i:=\{x^i\in\mathbb{R}^{|\mathcal{E}|}\mid x^i\ge0, Hx^i=h^i, x^i_e=0\ \forall e\notin \mathcal{E}^i\}$$ is the set of feasible edge flow allocations for agent $i$.
\end{definition}

 \subsection{ Sensitivity of the Nash equilibrium in routing games}
 \label{traffic_micro}
 
 As already noted, the atomic routing game is  a NAG.
The  operator associated with this game is 
\begin{equation}\label{operator_traffic_nash}
\begin{aligned}
F(x,y)&=[\nabla_{x^i} J^i(x^i,z(x),y)^\top]_{i=1}^N\\
&=[p(z(x),y) +\nabla_z p(z(x),y) x^i]_{i=1}^N.
\end{aligned}
\end{equation}
We assume the following regularity condition.
\begin{assumption}[Travel time functions]\label{traffic}
 Suppose that $F(x,y)$ in \eqref{operator_traffic_nash} satisfies Assumption \ref{cost} and \ref{ass:SMON}.  
\end{assumption} 
\begin{remark}
Note that for this application $P_{ii}=1\neq 0$, hence Theorem \ref{thm:mon} cannot be applied. Nonetheless, conditions on the travel time functions $p_e(\cdot,y^e)$ such that Assumption~\ref{traffic} is satisfied have been discussed for example in \cite{gentile2017nash}. As a special case we note that, Assumption~\ref{traffic} is always met in the case of affine and strictly increasing travel time functions. We also note that  Assumption \ref{constraints} is always satisfied in  routing games.
\end{remark}

 Under Assumption \ref{traffic},  Theorem \ref{thm:sens} guarantees the existence of a unique Nash equilibrium for the atomic routing game. To  compute  its sensitivity  we need to check Assumption~\ref{cq} at the current parameter $\bar y$. In the traffic context,  this condition ensures that, if an agent $i$ is not using road $e$ for the parameter $\bar y$ (i.e, $x^{\star i}_e(\bar y)=0$), then for small perturbations of the parameter it won't use road $e$ also in the new equilibrium $x^{\star i}(y)$. In other words, the subset of the constraints  $\{x^i_e\ge0\}_{e=1}^{|\mathcal{E}|}$ that are active should be locally unchanged (as guaranteed by  the proof of Theorem \ref{thm:sens} under Assumption  \ref{cq}).\footnote{This strict complementarity condition is similar to the strict  complementary condition commonly used in non-atomic routing games \cite{ lu2008sensitivity, do2014sensitivity}. }
If we  consider a parameter  $\bar y$ such that this assumption is met, then Theorem~\ref{thm:sens} allows us to compute  the Nash sensitivity  
\begin{equation}\label{sensitivity_traffic}
\begin{aligned}
\nabla_y x^\star( \bar y) &= - M [\nabla_y F(x,y) ]_{\{x=x^\star( \bar y),y= \bar y\}}\\
\end{aligned}
\end{equation}
where  $F(x,y)$ is as in \eqref{operator_traffic_nash} and $M,L$ are as defined in \eqref{L}. Note that, given $x^\star( \bar y)$,  the  matrix $A$ of active constraints used in \eqref{L} has the following structure  
$$A=\mbox{blkdiag}( \{ [H; R(x^{\star i}(\bar  y))] \}_{i=1}^N )$$ where $R(x^{\star i}(\bar y))$ is a  matrix constructed by adding one on top of each other  the set of  canonical vectors $\{e_e^\top| \forall e \in\mathcal{E} \mbox{ s.t. } x^{\star i}_e(\bar y)=0 \}$. Intuitively,  $R(x^{\star i}(\bar y))$ models the subset of the constraints  $\{x^i_e\ge0\}_{e=1}^{|\mathcal{E}|}$ that are active  at $x^{\star i}(\bar y)$.

 From \eqref{sensitivity_traffic}  one can then immediately derive the sensitivity of the total edge flow at the Nash equilibrium,
   \begin{align}\label{sensitivity_general}
  \nabla_y z^\star( \bar y)&= \textstyle \sum_{i=1}^N \nabla_y x^{\star i}(\bar y) 
= [ \mathbbm{1}_N^\top \otimes I_{|\mathcal{E}|} ]\nabla_y x^\star(\bar y)
  \end{align}
  and the sensitivity of the total travel time  
$s(y)$ in \eqref{totaltravel}
can immediately be computed from the sensitivity of the total edge flow as follows
\begin{equation}
\label{sensitivity_ttt}
 \nabla_y s(\bar y)= {\q^\star}^\top \nabla_y p({\q^\star},\bar y) +[p({\q^\star},\bar y)+\nabla_z p({\q^\star},\bar y){\q^\star}]^\top\nabla_y {\q^\star},
 \end{equation}
where we omitted the dependence of ${\q^\star}(\bar y)$ on $\bar y$ for simplicity. 
Formula \eqref{sensitivity_ttt} can be used to understand which road improvements lead, at least locally, to a higher improvement of travel time for the whole network. In fact 
$\textstyle s(y)\approx s(\bar y) + \sum_{e=1}^{|\mathcal{E}|} \frac{\partial s(\bar y) }{\partial y^e} (y^e- \bar y^e).$
 If we assume that a road improvement on road $e$ corresponds to a decrease of the parameter $y^e$, then $(y^e- \bar y^e)<0$ and the road that should be improved to minimize $s(y)$ is
$\textstyle \hat e:=\arg\max_{e\in \mathcal{E}} \frac{\partial s(\bar y) }{\partial y^e}.$
Note that it might happen that $\frac{\partial s(\bar y)}{\partial y^e}<0$ for some road $e\in\mathcal{E}$. In this case an improvement on road $e$ actually leads to an increase of total travel time. This pathological situation is known as \textit{Braess' paradox}. Formula  \eqref{sensitivity_ttt}  can then be used to check whether for a specific network the \textit{Braess' paradox}  occurs, as illustrated in the next subsection.\footnote{We finally note that it is possible to formulate the atomic routing game also in terms of   path flows  instead of edge flows  (as usually done in non-atomic games). In general  however    path flows  are not unique at the Nash equilibrium.
  Hence Theorem \ref{thm:sens} cannot be applied to this formulation.}

\subsection{ Simulation example}
To illustrate the  approach derived in Section \ref{traffic_micro} we consider the well known  Wheatstone $5$ roads network illustrated in Figure~\ref{net} with congestion functions 

\begin{align*}
 p(z,y)&=\arraycolsep=0.1pt 
\medmuskip = 0.1mu  \left[\begin{array}{c} p_1(z_1) \\p_2(z_2) \\p_3(z_3) \\p_4(z_4) \\ p_5(z_5,y)\end{array}\right] \arraycolsep=0.1pt
\medmuskip = 0.1mu  =\left[\begin{array}{ccccc} \frac{40}{150} & 0 & 0 & 0 & 0 \\0 &  \frac{1}{150} & 0 & 0 & 0 \\0 & 0 &  \frac{40}{150} & 0 & 0 \\0 & 0 & 0 &  \frac{1}{150} & 0 \\0 & 0 & 0 & 0 &  \frac{y}{150}\end{array}\right]\left[\begin{array}{c}z_1 \\ z_2 \\ z_3 \\ z_4 \\  z_5\end{array}\right]+ \left[\begin{array}{c}0 \\ 45 \\ 0 \\ 45 \\  0\end{array}\right]\\[0.2cm]
&=: C(y)z+c,
\end{align*}
where $y>0$ is a parameter affecting the travel time on road~$5$.

We note that with this choice of travel time functions Assumption \ref{traffic} is satisfied because we get 
$F(x,y)=[C(y)z(x)+c+C(y)x^i]_{i=1}^N$
where $C(y)\succ 0$ for all $y>0$. Consequently, by the properties of the Kronecker product,
\begin{equation}\nabla_x F(x,y)=[I_N+\mathbbm{1}_N \mathbbm{1}_N^\top] \otimes  C(y) \succ 0, \forall y>0,
\label{step_mon}
\end{equation}
since both $C(y)$ and $[I_N+\mathbbm{1}_N \mathbbm{1}_N^\top]$ are positive definite. 
Equation \eqref{step_mon} is a sufficient condition for $x\mapsto F(x,y)$ to be strongly monotone for every $y>0$.

 We consider a  game with $N=12$ agents and we assume that each agent $i$ has flow $\eta^i=12.5$ and wants to go from node $1$ to $4$. Moreover, we assume that only a fraction $q\in[0,1]$ of the agents  knows edge $5$ (i.e., $\mathcal{E}^i:=\mathcal{E}$  for $i\in[1,qN]$),
while the remaining agents don't (i.e., $\mathcal{E}^i:=\{1,2,3,4\}$ for $i\in[qN+1,N]$). The plot at the top of Figure \ref{tttN} shows the total travel time $s(y)$ at the Nash equilibrium as a function of the parameter $y$, for the $4$ different  scenarios in which the percentage $q$ of agents that  know road $5$ varies in $\{1,2/3,1/3,0\}$. The plot at the bottom shows the sensitivity of $s(y)$  computed according to \eqref{sensitivity_ttt}. First note that the \textit{Braess' paradox} is captured by the fact that for $q=1$ (i.e when all the agents know all the network) $\frac{\partial   s(y)}{\partial y}$ is negative, that is, augmenting the  cost of edge $5$ decreases the total travel time. Another interesting aspect that is illustrated by this picture is the \textit{informational Braess' paradox} \cite{acemoglu2016informational}, that is, the fact that reducing the information given to the agents (i.e. decreasing $q$) actually decreases the total travel time, with the limit result that if no agent knew edge $5$ (i.e. $q=0$) then they would all be better off.

\begin{figure}
\begin{center}
\includegraphics[width=0.48\textwidth]{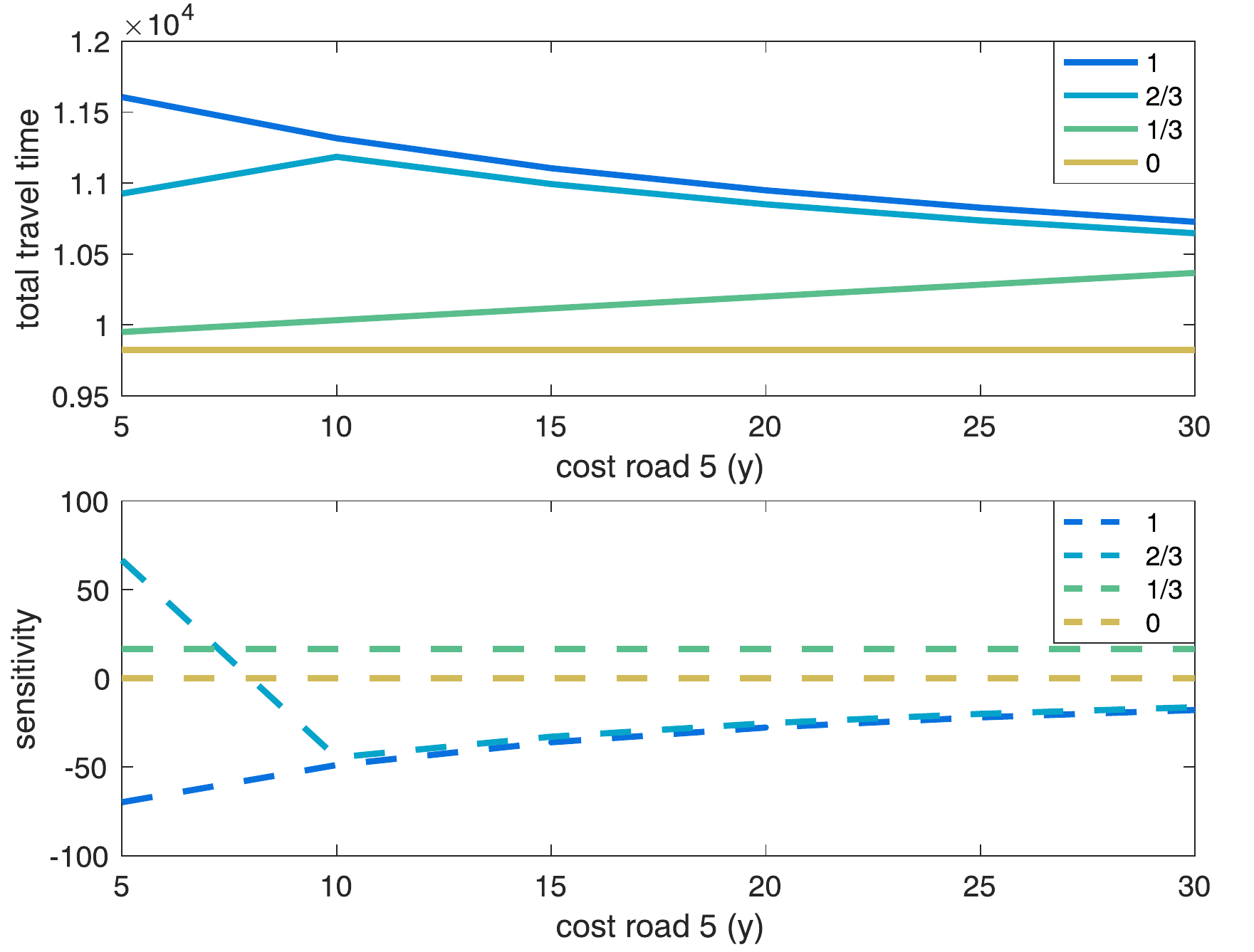}\\
 \end{center}
\caption{\small Total travel time  as a function of the  cost of road $5$ for a population of size $N=12$ and the road network in Fig. \ref{net}.}
\label{tttN}
\end{figure}

\section{ Conclusion}
By using a formula for the sensitivity of the Nash equilibrium in terms of primal variables only,
we significantly extended previous sensitivity results on quadratic network games and atomic 
routing games with  information constraints.  Moreover, our findings shed new light on the inter-centrality or key-player measure introduced in \cite{ballester2006s}.  In this paper we focused on small variations in the cost functions of the agents.
As future direction we aim at investigating the effect of macroscopic variations of other game primitives, as for example the network topology.

\bibliographystyle{IEEEtran}

\bibliography{mit.bib}

\end{document}